\documentclass[10pt, conference, twocolumn]{IEEEtran}
\usepackage{amssymb}
\usepackage{amsmath}
\usepackage{graphicx}
\usepackage{amsfonts}
\usepackage{dsfont}%
\usepackage{psfrag}
\setcounter{MaxMatrixCols}{30}
\IEEEoverridecommandlockouts
\overrideIEEEmargins
\newtheorem{thm}{Theorem}[section]

\newtheorem{prop}[thm]{Proposition}
\newtheorem{defn}{Definition}[section]
\newtheorem{rem}{Remark}[section]
\begin{document}

\title{Relaying Simultaneous Multicast Messages}
\author{D. G\"{u}nd\"{u}z$^{1,2}$, O. Simeone$^{3}$, A. Goldsmith$^{1}$, H. V.
Poor$^{2}$ and S. Shamai (Shitz)$^{4}$\\$^{1}$Dept. of Electrical Engineering, Stanford Univ., Stanford, CA 94305, USA\\$^{2}$Dept. of Electrical Engineering, Princeton Univ., Princeton, NJ 08544, USA\\$^{3}$CWCSPR, New Jersey Institute of Technology, Newark, NJ 07102, USA\\$^{4}$Dept. of Electrical Engineering, Technion, Haifa, 32000,
Israel \thanks{This work was supported by U.S. National Science Foundation
under grants CNS-06-26611 and CNS-06-25637, the DARPA ITMANET program under
grant 1105741-1-TFIND, the U.S. Army Research Office under MURI award
W911NF-05-1-0246 and by the Israel Science Foundation and the European Commission in the framework of the FP7 Network of Excellence in Wireless COMmunications NEWCOM++.}}

\maketitle

\begin{abstract}
The problem of multicasting multiple messages with the help of a relay, which
may also have an independent message of its own to multicast, is considered.
As a first step to address this general model, referred to as the compound
multiple access channel with a relay (cMACr), the capacity region of the
multiple access channel with a \textquotedblleft cognitive\textquotedblright%
\ relay is characterized, including the cases of partial and rate-limited
cognition. Achievable rate regions for the cMACr model are then presented
based on decode-and-forward (DF) and compress-and-forward (CF) relaying
strategies. Moreover, an outer bound is derived for the special case in which
each transmitter has a direct link to one of the receivers while the
connection to the other receiver is enabled only through the relay terminal.
Numerical results for the Gaussian channel are also provided.

\end{abstract}

%\vspace{-.5in}

\thispagestyle{empty}

\section{Introduction}

\label{s:intro}

Consider two non-cooperating satellites each multicasting radio/TV signals to
users on Earth. The coverage area and the quality of the transmission is
limited by the strength of the direct links from the satellites to
the users. To extend coverage, to increase capacity or to improve robustness,
a standard solution is that of introducing relay terminals, which may be other
satellite stations or stronger ground stations. The role of the relay
terminals is especially critical for users that lack a direct link from any of the satellites.

Cooperative transmission has been extensively studied both for a single user
with a dedicated relay terminal \cite{Cover:IT:79}, \cite{Kramer:IT:05} and
for two cooperating users \cite{Willems:IT:83}. In this work, we study cooperation in a model with two source terminals simultaneously multicasting independent information to two receivers
with the help of a relay. While the source terminals cannot directly cooperate, the relay can support both transmissions simultaneously to enlarge the multicast capacity region. Moreover, it is assumed that the relay station has also its own
message to be multicast.

The model under study is a \emph{compound multiple access channel with a
relay} (cMACr) and can be seen as an extension of several channel
models, for example, the compound multiple access channel (MAC), the broadcast channel
and the relay channel. The main goal of this work is to provide achievable rate regions and an outer bound on the capacity region for this model. We start our analysis by studying a simplified version of the cMACr that consists of a MAC with a ``cognitive'' relay. In this scenario the cognitive
relay is assumed to know both messages non-causally. We
provide the capacity region for this model. As an
intermediate step between the cognitive relay model and cMACr, we also consider the relay with finite capacity unidirectional links from the transmitters and find its capacity region. In this scenario, parts of the messages are transmitted to the relay over the finite capacity links and the rates of these links determine how much the relay can help each user. This is not the case in the general cMACr model since decoding at the relay might be restrictive, yet we can use these techniques to obtain achievable rate regions.

We provide achievable rate regions for cMACr with decode-and-forward (DF) and
compress-and-forward (CF) relaying. In the CF scheme, the relay, instead of
decoding the messages, quantizes and broadcasts its received signal. This
corresponds to the joint source-channel coding problem of broadcasting a
common source to two receivers, each with its own correlated side information,
in a lossy fashion, studied in \cite{Nayak:IT:08}. This result indicates that
the pure channel coding rate regions for certain multi-user networks can be
improved by exploiting related joint source-channel coding techniques. The cMACr model is also studied in \cite{Maric:MILCOM:07}, where DF and amplify-and-forward (AF) based protocols are analyzed, assuming that no private relay message is available.

%---------------------------
\psfrag{W1}{$W_1$}\psfrag{W2}{$W_2$}\psfrag{W3}{$W_3$}
\psfrag{X1}{$X_1$}\psfrag{X2}{$X_2$}\psfrag{X3}{$X_3$}
\psfrag{Y1}{$Y_1$}\psfrag{Y2}{$Y_2$}\psfrag{Y3}{$Y_3$}
\psfrag{pxy}{${\textstyle p(y_1,y_2,y_3|x_1,x_2,x_3)}$}
\psfrag{hW1}{${\scriptstyle \hat{W}_1(1), \hat{W}_2(1), \hat{W}_3(1)}$}
\psfrag{hW2}{${\scriptstyle \hat{W}_1(2), \hat{W}_2(2), \hat{W}_3(2)}$}
\psfrag{Rel}{\small Relay}\psfrag{S1}{\small Encoder 1}\psfrag{S2}{\small Encoder 2}
\psfrag{D1}{\small Decoder 1}\psfrag{D2}{\small Decoder 2}
\psfrag{Ch}{cMACr}
\begin{figure}
\centering
\includegraphics[width=3.5in]{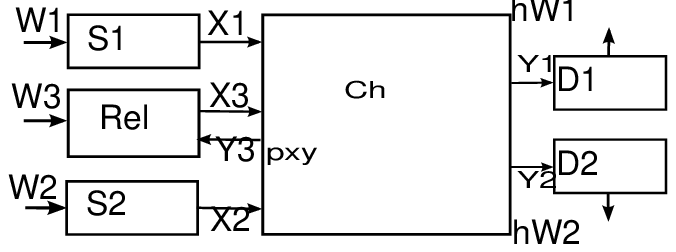}
\caption{A compound MAC with a relay (cMACr).}
\label{model}
\end{figure}
%---------------------------

The rest of the paper is organized as follows. The system model is introduced
in Section \ref{s:system_model}. In Section \ref{s:MAC_cogrel} we study several cognitive relay models. The compound multiple access channel with a relay is studied in Section \ref{s:cMACr}. In Section \ref{p:GcMACr} numerical results for the Gaussian channel setup are presented. Section
\ref{s:conc} concludes the paper.

\section{System Model}\label{s:system_model}

A compound multiple access channel with relay consists of three input
alphabets $\mathcal{X}_{1}$, $\mathcal{X}_{2}$ and $\mathcal{X}_{3}$ of
transmitter 1, transmitter 2 and the relay, respectively, and three
output alphabets $\mathcal{Y}_{1}$, $\mathcal{Y}_{2}$ and $\mathcal{Y}_{3}$ of
receiver 1, receiver 2 and the relay, respectively. We consider a discrete
memoryless time-invariant channel without feedback characterized by $p(y_{1},y_{2},y_{3}|x_{1},x_{2},x_{3})$ (see Fig. \ref{model}). Transmitter $i$ has message $W_{i}\in\mathcal{W}_{i}$, $i=1,2$,
while the relay terminal also has a message $W_{3}\in\mathcal{W}_{3}$ of its
own, all of which need to be transmitted reliably to both receivers.

\begin{defn}
A $(2^{nR_{1}},2^{nR_{2}},2^{nR_{3}},n)$ code for the cMACr consists of three
sets $\mathcal{W}_{i}=\{1,\ldots,2^{nR_{i}}\}$ for $i=1,2,3$, two encoding
functions $f_{i}$ at the transmitters, $i=1,2$, $f_{i}:\mathcal{W}%
_{i}\rightarrow\mathcal{X}_{i}^{n}$, a set of (causal) encoding functions
$g_{j}$ at the relay, $j=1,\ldots,n$, $g_{j}:\mathcal{W}_{3}\times
\mathcal{Y}_{3}^{j-1}\rightarrow\mathcal{X}_{3}$, and two decoding functions
$h_{i}$ at the receivers, $i=1,2$, $h_{i}:\mathcal{Y}_{i}^{n}\rightarrow
\mathcal{W}_{1}\times\mathcal{W}_{2} \times\mathcal{W}_{3}$.
\end{defn}

%Compound MAC with memoryless channels and no feedback:%
%\begin{equation}
%p(y_{\{1,2,3\}i}|w_{\{1,2,3\}},x_{\{1,2\}}^{n},x_3^i,y_{\{1,2,3\}}%
%^{i-1})=p(y_{\{1,2,3\}i}|x_{\{1,2,3\}i})=p_{Y_{\{1,2,3\}}|X_{\{1,2,3\}}%
%}(y_{\{1,2,3\}i}|x_{\{1,2,3\}i}).
%\end{equation}

We assume that the relay terminal is capable of full-duplex operation, i.e.,
it can receive and transmit at the same time instant. The average error probability is defined as
\[
P_{e}^{n}\triangleq\frac1{2^{n(R_{1}+R_{2}+R_{3})}} \sum_{\mathbf{W}} \Pr\left[  \bigcup_{i=1,2}\{\hat{\mathbf{W}}(i) \neq \mathbf{W} \}\right] ,
\]
where we defined $\mathbf{W}\triangleq(W_{1}, W_{2}, W_{3})$ and $\hat{\mathbf{W}}(i)
\triangleq(\hat{W}_{1}(i), \hat{W}_{2}(i), \hat{W}_{3}(i))$.

\begin{defn}
A rate triplet $(R_{1},R_{2},R_{3})$ is said to be \emph{achievable} for the
cMACr if there exists a sequence of $(2^{nR_{1}},2^{nR_{2}},2^{nR_{3}},n)$
codes with $P_{e}^{n}\rightarrow0$ as $n\rightarrow\infty$.
\end{defn}

\begin{defn}
The \emph{capacity region} $\mathcal{C}$ for the cMACr is the closure of the
set of all achievable rate triplets.
\end{defn}

\section{MAC with a Cognitive Relay}

\label{s:MAC_cogrel}

As stated in Section \ref{s:intro}, before addressing the general cMACr model we first study the MAC with a cognitive relay scenario in which the messages $W_{1}$ and
$W_{2}$ are assumed to be non-causally available at the relay terminal (in a
\textquotedblleft cognitive\textquotedblright\ fashion \cite{Devroye:IT:06})
and there is only one receiver ($\mathcal{Y}_{2}=\mathcal{Y}_{3}=\emptyset$
and $\mathcal{Y}=\mathcal{Y}_{1}$). The next proposition provides the capacity
region for this model.

%\begin{figure}[ptb]
%\begin{center}
%\includegraphics[width=3.5in]
%{model21.eps}
%\end{center}
%\caption{MAC with a cognitive relay.}%
%\label{model2}%
%\end{figure}

\begin{prop}
\label{prop:1} For the MAC with a cognitive relay, the capacity region is the
closure of the set of all non-negative $(R_{1},R_{2},R_{3})$ satisfying
\begin{subequations}
\label{region cognitive}%
\begin{align}
R_{3}  &  \leq I(X_{3};Y|X_{1},X_{2},U_{1},U_{2},Q),\\
R_{1}+R_{3}  &  \leq I(X_{1},X_{3};Y|X_{2},U_{2},Q),\\
R_{2}+R_{3}  &  \leq I(X_{2},X_{3};Y|X_{1},U_{1},Q), \mbox{ and }\\
R_{1}+R_{2}+R_{3}  &  \leq I(X_{1},X_{2},X_{3};Y|Q) \label{non-conditional}%
\end{align}
for some joint distribution of the form
\end{subequations}
\begin{equation}
p(q)p(x_{1},u_{1}|q)p(x_{2},u_{2}|q)p(x_{3}|u_{1},u_{2},q)p(y|x_{1}%
,x_{2},x_{3}) \label{pmf}%
\end{equation}
for some auxiliary random variables $U_{1}$, $U_{2}$ and $Q$.
\end{prop}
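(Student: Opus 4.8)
The plan is to prove the two inclusions separately, both by textbook techniques, so the bulk of the argument is bookkeeping rather than ideas. For achievability I would use superposition (cloud-center) coding with joint typicality decoding. Fix a pmf of the form \eqref{pmf}, generate a time-sharing sequence $Q^n$ i.i.d.\ $\sim p(q)$, and then: for each $w_1$ draw $(U_1^n(w_1),X_1^n(w_1))$ i.i.d.\ $\sim p(x_1,u_1|q)$ conditioned on $Q^n$, and likewise $(U_2^n(w_2),X_2^n(w_2))$ for each $w_2$; for each triple $(w_1,w_2,w_3)$ draw the relay codeword $X_3^n(w_1,w_2,w_3)$ i.i.d.\ $\sim p(x_3|u_1,u_2,q)$ conditioned on $(U_1^n(w_1),U_2^n(w_2),Q^n)$. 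Transmitter $i$ sends $X_i^n(w_i)$, the cognitive relay sends $X_3^n(w_1,w_2,w_3)$, and the receiver declares the unique triple whose tuple $(Q^n,U_1^n,X_1^n,U_2^n,X_2^n,X_3^n,Y^n)$ is jointly typical. The error analysis is organized by which of $\hat w_1,\hat w_2,\hat w_3$ differ from the transmitted indices: if only $\hat w_3$ changes, only $X_3^n$ is resampled against the true cloud centers, giving the exponent $I(X_3;Y|X_1,X_2,U_1,U_2,Q)$; if $\hat w_2$ (and possibly $\hat w_3$) changes, $(U_2^n,X_2^n)$ and then $X_3^n$ are resampled, and because $Y$ depends on the inputs only through $(X_1,X_2,X_3)$ the auxiliary $U_2$ drops out of the exponent, leaving $I(X_2,X_3;Y|X_1,U_1,Q)$ (symmetrically for $\hat w_1$); if all three change, everything is resampled and the exponent collapses to $I(X_1,X_2,X_3;Y|Q)$. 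The remaining error events (e.g.\ $\hat w_1\neq w_1$ with $w_3$ held fixed) give constraints dominated by \eqref{region cognitive} since $R_i\ge0$. The packing lemma and a union bound then give \eqref{region cognitive}; the only step worth stating carefully is the conditional joint-typicality estimate for the relay codeword, which is superimposed on two separate clouds rather than one.

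For the converse, fix a sequence of codes with $P_e^n\to0$. Introduce $Q=J$ uniform on $\{1,\ldots,n\}$ and independent of the messages, put $X_i=X_{iJ}$, $X_3=X_{3J}$, $Y=Y_J$, and make the auxiliary identification $U_1=W_1$, $U_2=W_2$. Since $W_1,W_2,W_3$ are mutually independent, $(X_{1j},U_1)$ is a function of $W_1$ and $(X_{2j},U_2)$ of $W_2$, so $(X_1,U_1)$ and $(X_2,U_2)$ are conditionally independent given $Q$; and since in this model the relay has no channel output and its $j$-th input is a deterministic function of $(W_1,W_2,W_3)=(U_1,U_2,W_3)$, we get $X_3\perp(X_1,X_2)\mid(U_1,U_2,Q)$. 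Hence the induced joint law is of the form \eqref{pmf}. Each of the four inequalities then follows from Fano's inequality plus a routine single-letterization: for instance $nR_3=H(W_3\mid W_1,W_2)\le I(W_3;Y^n\mid W_1,W_2)+n\epsilon_n$, and after the chain rule, dropping $Y^{j-1}$, inserting the deterministic $X_{1j},X_{2j}$ in the positive entropy term, and using memorylessness in the negative one, one obtains $R_3\le I(X_3;Y\mid X_1,X_2,U_1,U_2,Q)$; the bounds on $R_1+R_3$, $R_2+R_3$, and $R_1+R_2+R_3$ come out the same way upon conditioning on $W_1$, on $W_2$, or on nothing, respectively. Taking $n\to\infty$ and invoking the closure in the definition of $\mathcal{C}$ completes the converse.

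I expect the real content to be in the converse's auxiliary choice: it must simultaneously respect the factorization \eqref{pmf} --- in particular the Markov chain $X_3-(U_1,U_2,Q)-(X_1,X_2)$ --- and make all four bounds single-letterize. The identification $U_i=W_i$ does both, at the cost of unbounded auxiliary alphabets, which is harmless here since \eqref{pmf} carries no cardinality constraint. On the achievability side the only point needing care is the conditional typicality bound for the relay's codeword superimposed on two independent clouds; everything else is the standard three-encoder MAC error-event calculation.
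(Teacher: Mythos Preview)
Your proof is correct, but it takes a different route from the paper's. The paper does not prove this proposition from scratch; it simply observes that the MAC with a cognitive relay is an instance of a three-user MAC with common messages (here $W_1$ is common to encoder~1 and the relay, $W_2$ to encoder~2 and the relay, and $W_3$ is private to the relay) and invokes Han's general capacity result for that model. Your approach, by contrast, is self-contained: superposition coding on two independent cloud centers $(U_1,U_2)$ for achievability, and the converse via the identification $U_i=W_i$, $Q=J$. What the paper's route buys is brevity and a unified treatment of all the common-message patterns at once; what yours buys is a direct argument that makes the role of the auxiliaries transparent (in particular why the Markov chain $X_3-(U_1,U_2,Q)-(X_1,X_2)$ is the right structural constraint) and avoids having to unpack Han's region into the form \eqref{region cognitive}. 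Both are valid; if you want to match the paper you would simply cite the MAC-with-common-messages capacity theorem and note that the present model is the special case where the relay shares a common message with each source.
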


\begin{proof}
The capacity region of a MAC with three users and any combination of ``common
messages'' (i.e., messages known ``cognitively'' to more than one user) is
given in \cite{Han}.
\end{proof}

We next consider the cases of partial and limited-rate cognition.

\begin{prop}
\label{prop:2} The capacity region of the MAC with a partially cognitive relay
(informed only of $W_{1})$ is the closure of the set of all non-negative
$(R_{1},R_{2},R_{3})$ satisfying
\begin{subequations}
\label{partially cognitive}%
\begin{align}
R_{2} &  \leq I(X_{2};Y|X_{1},X_{3},Q),\\
R_{3} &  \leq I(X_{3};Y|X_{1},X_{2},Q),\\
R_{1}+R_{3} &  \leq I(X_{1},X_{3};Y|X_{2},Q),\\
R_{2}+R_{3} &  \leq I(X_{2},X_{3};Y|X_{1},Q),\mbox{ and }\\
R_{1}+R_{2}+R_{3} &  \leq I(X_{1},X_{2},X_{3};Y|Q).
\end{align}
for an input distribution of the form $p(q)p(x_{2}|q)p(x_{1},x_{3}|q).$
\end{subequations}
\end{prop}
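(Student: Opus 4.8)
The plan is to prove achievability and the converse directly; the structure here is that of a three-user MAC in which encoders $1$ and $3$ share the message $W_1$ (the relay knows $W_1$) while encoder $2$ is independent, so one could alternatively specialize the MAC-with-common-messages capacity region of \cite{Han} exactly as in the proof of Proposition~\ref{prop:1}.

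\emph{Achievability.} Fix $p(q)p(x_2|q)p(x_1,x_3|q)$. Draw $q^n$ i.i.d.\ $\sim p(q)$; generate $2^{nR_1}$ codewords $x_1^n(w_1)\sim\prod_t p(x_{1t}|q_t)$; for each $w_1$ generate $2^{nR_3}$ codewords $x_3^n(w_1,w_3)\sim\prod_t p(x_{3t}|x_{1t}(w_1),q_t)$; and independently generate $2^{nR_2}$ codewords $x_2^n(w_2)\sim\prod_t p(x_{2t}|q_t)$. Encoder $i$ transmits $x_i^n(W_i)$ for $i=1,2$, and the relay transmits $x_3^n(W_1,W_3)$. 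The receiver looks for the unique $(\hat W_1,\hat W_2,\hat W_3)$ with $(q^n,x_1^n(\hat W_1),x_2^n(\hat W_2),x_3^n(\hat W_1,\hat W_3),y^n)$ jointly typical. I would then partition competing triples according to which of $W_1,W_2,W_3$ are decoded incorrectly, using that $\hat W_1\neq W_1$ automatically makes $x_3^n(\hat W_1,\cdot)$ a freshly generated codeword, so that the pair $(x_1^n(\hat W_1),x_3^n(\hat W_1,\cdot))$ is distributed as $\prod_t p(x_{1t},x_{3t}|q_t)$. The binding patterns are: $W_2$ only wrong $\Rightarrow R_2\le I(X_2;Y|X_1,X_3,Q)$; $W_3$ only wrong $\Rightarrow R_3\le I(X_3;Y|X_1,X_2,Q)$; $W_1,W_3$ wrong (about $2^{n(R_1+R_3)}$ competitors) $\Rightarrow R_1+R_3\le I(X_1,X_3;Y|X_2,Q)$; $W_2,W_3$ wrong $\Rightarrow R_2+R_3\le I(X_2,X_3;Y|X_1,Q)$; all wrong $\Rightarrow R_1+R_2+R_3\le I(X_1,X_2,X_3;Y|Q)$. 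The remaining patterns ($W_1$ only, or $W_1,W_2$) yield constraints implied by these since $R_i\ge 0$.

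\emph{Converse.} Given codes with $P_e^n\to0$, Fano gives $H(W_1,W_2,W_3|Y^n)\le n\epsilon_n$, and each bound follows by conditioning on a subset of messages: e.g.\ $nR_2=H(W_2|W_1,W_3)\le\sum_t I(W_2;Y_t|W_1,W_3,Y^{t-1})+n\epsilon_n$; since $X_{1t}$ is a function of $W_1$ and $X_{3t}$ of $(W_1,W_3)$ (the relay has no received signal in this model), $H(Y_t|W_1,W_3,Y^{t-1})\le H(Y_t|X_{1t},X_{3t})$, while memorylessness gives $H(Y_t|W_1,W_2,W_3,Y^{t-1})=H(Y_t|X_{1t},X_{2t},X_{3t})$, hence $I(W_2;Y_t|W_1,W_3,Y^{t-1})\le I(X_{2t};Y_t|X_{1t},X_{3t})$. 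The same manipulation with conditioning sets $\{W_1,W_2\}$, $\{W_1\}$, $\{W_2\}$, $\emptyset$ produces the single-letter bounds for $R_3$, $R_1+R_3$, $R_2+R_3$, $R_1+R_2+R_3$. Introducing $Q$ uniform on $\{1,\dots,n\}$ and independent of everything, with $X_j\triangleq X_{jQ}$ and $Y\triangleq Y_Q$, collapses the Cesàro averages into the stated mutual informations; and since $X_2^n$ depends only on $W_2$ while $(X_1^n,X_3^n)$ depends only on $(W_1,W_3)$ with $W_2\perp(W_1,W_3)$, the induced law factors as $p(q)p(x_2|q)p(x_1,x_3|q)p(y|x_1,x_2,x_3)$.

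\emph{Main obstacle.} The one step that needs real care is the achievability bookkeeping: because $X_3$ is superimposed on $X_1$ and indexed by $(W_1,W_3)$, one must count competing pairs and triples correctly, verify that a wrong $W_1$ reuses no information about the true $x_3^n$, and confirm that the "$W_1$ only" and "$W_1,W_2$" events introduce no new binding inequalities. The converse is routine once one notes that the absence of a relay-received signal makes $X_{3t}$ a deterministic function of $(W_1,W_3)$, so no feedback terms arise; as remarked, both directions also follow immediately by specializing \cite{Han}.
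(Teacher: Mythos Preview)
Your proposal is correct. The paper itself defers the full proof to the companion submission \cite{Gunduz:ITtbs} and, in Remark~3.1, only indicates that the region can be obtained by specializing Han's MAC-with-common-messages result \cite{Han} and observing that, with cognition of $W_1$ only, no auxiliary random variables beyond the time-sharing $Q$ are needed. You instead give a self-contained direct argument: superposition of $X_3$ on $X_1$ for achievability with a careful accounting of error events (in particular the observation that a wrong $\hat W_1$ renders $x_3^n(\hat W_1,\cdot)$ fresh, so the ``$W_1$ only'' and ``$W_1,W_2$'' patterns are absorbed), and a standard Fano/chain-rule converse exploiting that $X_{3t}$ is a deterministic function of $(W_1,W_3)$. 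Both routes are valid; the Han specialization is shorter once one is willing to cite that result and argue away the auxiliaries, while your direct proof is more transparent about why exactly five inequalities survive and why the factorization $p(q)p(x_2|q)p(x_1,x_3|q)$ is the right one.
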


%\begin{proof}
%The proof can be found in Appendix \ref{app:2}.
%\end{proof}

\begin{proof}
The proof, which we present in \cite{Gunduz:ITtbs}, is skipped here due to lack of space.
\end{proof}

\begin{rem}
The capacity region in Proposition \ref{prop:2} follows, like Proposition
\ref{prop:1}, from the capacity result by \cite{Han} for MAC with certain correlated sources. However, the
formulation given in (\ref{partially cognitive}) is more convenient than the
one obtained from \cite{Han} since, in the case of partial cognition, the
capacity region characterization does not require auxiliary random variables
in addition to the time-sharing random variable $Q$. This is because, unlike
the scenario covered by Proposition \ref{prop:1}, in which the relay's
codeword can depend on both $W_{1}$ and $W_{2}$, and the auxiliary random
variables quantify the amount of dependence on each message, for Proposition
\ref{prop:2}, the relay cooperates with only one source, and no auxiliary
random variable is needed.
\end{rem}

The MAC with a cognitive relay model can be further generalized to a scenario
with \textit{limited-capacity cognition}, in which the sources are connected
to the relay via finite-capacity orthogonal links, rather than having a priori
knowledge of the terminals' messages. In particular, assume that terminal $i$
can communicate with the relay, prior to transmission, via a link of
capacity\ $C_{i}$ for $i=1,2$. The following proposition establishes the
capacity region of this model.

\begin{prop}\label{prop:conf}
The capacity region of the MAC with a cognitive relay
connected to the source terminals via (unidirectional) links of capacities
$C_{1}$ and $C_{2}$ is given by
\begin{subequations}
\begin{align*}
R_{1}  &  \leq I(X_{1};Y|X_{2},X_{3},U_{1},U_{2},Q)+C_{1},\\
R_{2}  &  \leq I(X_{2};Y|X_{1},X_{3},U_{1},U_{2},Q)+C_{2},\\
R_{3}  &  \leq I(X_{3};Y|X_{1},X_{2},U_{1},U_{2},Q),\\
R_{1}+R_{2}  &  \leq I(X_{1},X_{2};Y|X_{3},U_{1},U_{2},Q)+C_{1}+C_{2},\\
R_{1}+R_{3}  &  \leq\min\left\{
\begin{array}
[c]{l}%
I(X_{1},X_{3};Y|X_{2},U_{1},U_{2},Q)+C_{1},\text{ }\\
I(X_{1},X_{3};Y|X_{2},U_{2},Q)
\end{array}
\right. \\
%\end{align*}
%\begin{align*}
R_{2}+R_{3}  &  \leq\min\left\{
\begin{array}
[c]{l}%
I(X_{2},X_{3};Y|X_{1},U_{1},U_{2},Q)+C_{2}\\
I(X_{2},X_{3};Y|X_{1},U_{1},Q)
\end{array}
\right. \\
R_{1} + R_{2}+R_{3}  &  \leq\min\{ I(X_{1},X_{2},X_{3};Y|U_{1},Q)+C_{1},\\
&  \hspace{-.3in} I(X_{1},X_{2},X_{3};Y|U_{2},Q)+C_{2}, I(X_{1},X_{2}%
,X_{3};Y|Q),\\
&  \hspace{-.3in} I(X_{1},X_{2},X_{3};Y|U_{1},U_{2},Q)+C_{1}+C_{2} \}
\end{align*}
for auxiliary random variables $U_{1}, U_{2}$ and $Q$ with joint distribution
of the form (\ref{pmf}).
\end{subequations}
\end{prop}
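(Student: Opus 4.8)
The plan is to establish achievability and a matching converse for the rate region in Proposition~\ref{prop:conf}, treating it as a hybrid of the limited-rate Willems-style conferencing argument and the cognitive-relay result of Proposition~\ref{prop:1}. First I would set up the coding structure: split each message as $W_i = (W_i', W_i'')$ for $i=1,2$, where the "primed" part has rate at most $C_i$ and is communicated to the relay over the finite-capacity orthogonal link before transmission begins (this uses $nC_i$ bits, so the rate of $W_i'$ can be up to $C_i$). The relay thus learns $(W_1', W_2', W_3)$ and can cooperate on exactly those parts, playing the role of the cognitive relay of Proposition~\ref{prop:1} but only with respect to $(W_1',W_2')$ rather than the full messages. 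The auxiliary variables $U_1, U_2$ encode, as in Proposition~\ref{prop:1}, how much the relay's codeword $X_3$ depends on $W_1'$ and $W_2'$ respectively, with $U_i$ carried jointly with $X_i$ by transmitter~$i$ (so that transmitter~$i$ also cooperates with the relay on the primed part), and the joint distribution is exactly \eqref{pmf}.

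Second, for the achievability direction I would apply the capacity region of the three-user MAC with arbitrarily correlated/common message components from \cite{Han} to the "superchannel" whose inputs are $(X_1,X_2,X_3)$ but where the message indices seen by the receiver decompose as: a component $(W_1'',W_1')$ from encoder~1, a component $(W_2'',W_2')$ from encoder~2, the component $W_3$ from the relay, with $W_1'$ common to encoder~1 and the relay, and $W_2'$ common to encoder~2 and the relay. Writing out the Han--Kobayashi-type bounds for this configuration and then substituting $R_i = R_i^{(1)} + R_i^{(2)}$ with $R_i^{(1)} \le C_i$ (rate of the primed part) and Fourier--Motzkin eliminating the auxiliary rates $R_1^{(1)}, R_2^{(1)}$ should yield precisely the $\min\{\cdot\}$ expressions in the statement: the terms \emph{with} $+C_i$ arise from bounds in which the primed part of message $i$ is treated as "new" information decoded at the receiver (and hence its rate is limited by the superchannel mutual information plus the $C_i$ it received "for free" over the link), while the terms \emph{without} $+C_i$ arise from the bounds in which that part is already known cooperatively (conditioned on $U_i$) and contributes no extra constraint.

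Third, the converse: given any achievable $(R_1,R_2,R_3)$ I would identify the auxiliary random variables from the code. Let $Q$ be the usual time-sharing variable (or uniform time index), and define $U_{1,t}$ and $U_{2,t}$ from the messages/links — the natural choice is to let $U_{i,t}$ be a function of the information that passed over link~$i$ together with the relevant past, so that $X_{3,t}$ is a function of $(U_{1}^{t}, U_{2}^{t})$ and $U_{i}$ is independent of $(X_j, U_j)$ for $j \ne i$ because the two links carry independent content, giving the Markov structure of \eqref{pmf}. Then I would run Fano's inequality at each receiver and bound the various sums $\sum_{i \in S} R_i$ by a chain of mutual-information inequalities, at each step paying attention to whether the $nC_i$ bits sent over link~$i$ should be added (when message~$i$'s full content is being "paid for") or absorbed into the conditioning (when conditioning on $U_i$). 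The standard cut-set-plus-conferencing manipulation—$nR_i \le I(W_i; Y^n) \le I(W_i; Y^n \mid \text{link content of } i) + H(\text{link content}) \le n I(\cdots) + nC_i$—produces the $+C_i$ terms, while the bounds conditioned on $U_i$ come from the cut that isolates the cooperatively-known part.

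The main obstacle I expect is the Fourier--Motzkin elimination in the achievability proof: getting the message-splitting bookkeeping exactly right so that eliminating $R_1^{(1)}$ and $R_2^{(1)}$ (subject to $0 \le R_i^{(1)} \le C_i$) from the Han region for the correlated-message MAC collapses to exactly the seven inequalities stated, with the right pattern of which mutual-information terms pick up $+C_1$, $+C_2$, both, or neither. A secondary subtlety is verifying in the converse that the identified $U_1, U_2$ genuinely satisfy the conditional-independence factorization in \eqref{pmf} — i.e.\ that one may take $U_i$ to depend only on link-$i$ content and still capture the full dependence of $X_3$ on the cooperatively-shared information; this is where the orthogonality of the two conferencing links is essential and should be argued carefully. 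I would note that the bulk of both directions is a direct specialization of \cite{Han} (as in Propositions~\ref{prop:1} and~\ref{prop:2}) augmented with the conferencing-link accounting of Willems, so the full details can be relegated to \cite{Gunduz:ITtbs}.
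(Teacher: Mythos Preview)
The paper itself does not prove Proposition~\ref{prop:conf}: its entire proof reads ``The proof, which we present in \cite{Gunduz:ITtbs}, is skipped here due to lack of space.'' So there is nothing in this document to compare your argument against at the level of detail you have written.

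That said, your plan is the natural one and almost certainly coincides with what the authors have in mind. The rate-splitting $W_i=(W_i',W_i'')$ with $W_i'$ of rate at most $C_i$ shipped over the orthogonal link, followed by an appeal to Han's common-message MAC region \cite{Han} with $W_1'$ common to encoder~1 and the relay and $W_2'$ common to encoder~2 and the relay, is exactly the Willems-conferencing-plus-cognition hybrid the paper's narrative sets up (cf.\ the remark following Proposition~\ref{prop:2} and the discussion preceding Proposition~\ref{prop:conf}). Your identification of $U_i$ in the converse as (a function of) the link-$i$ content is also the standard move and is what makes the factorization \eqref{pmf} hold, since the two orthogonal links carry functions of the independent messages $W_1$ and $W_2$ respectively.

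Your two flagged obstacles are the right ones to worry about. The Fourier--Motzkin elimination is tedious but mechanical; the pattern of which inequalities pick up $+C_1$, $+C_2$, both, or neither is determined by whether the corresponding Han constraint conditions on $U_1$, $U_2$, both, or neither, and your description of this is correct. For the converse factorization, the key observation you should make explicit is that $X_{3,t}$ is a deterministic function of $(W_3,V_1,V_2)$ where $V_i$ is the link-$i$ output, and $V_i$ is a function of $W_i$ alone; setting $U_{i}=V_i$ (with $Q$ the time index) then gives the required conditional independence $p(x_1,u_1|q)p(x_2,u_2|q)$ directly from the independence of $W_1$ and $W_2$.
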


\begin{proof}
The proof, which we present in \cite{Gunduz:ITtbs}, is skipped here due to lack of space.
\end{proof}

\begin{rem}
\label{r:cMAC_cogrel} Based on the results of this section, we can now take a
further step towards the analysis of the cMACr of Fig. \ref{model} by
considering the \textit{compound MAC with a cognitive relay}. This channel is shown in Fig. \ref{model} with the only difference that the relay here is
informed \textquotedblleft for free\textquotedblright\ of the messages $W_{1}$
and $W_{2}$ and that the signal received at the relay is non-informative,
e.g., $\mathcal{Y}_{2}=\emptyset.$ The capacity of such a channel follows
easily from Proposition \ref{prop:1} by taking the union over the distribution
$p(q)p(x_{1},u_{1}|q)p(x_{2},u_{2}|q)p(x_{3}|u_{1},u_{2},q)p(y_{1},y_{2}|x_{1},x_{2},x_{3})$ of the intersection of the two rate regions
(\ref{region cognitive}) evaluated for the two outputs $Y_{1}$ and $Y_{2}$.
Notice that this capacity region depends on the channel inputs only through
the marginal distributions $p(y_{1}|x_{1},x_{2},x_{3})$ and $p(y_{2}%
|x_{1},x_{2},x_{3}).$
\end{rem}

%In this section we have focused on the multiple access channel with a relay model, which can be considered as a special case of the more general cMACr. In the next section, we provide inner and The capacity results and achievability techniques provided here for the MAC with relay will

\section{Inner and Outer bounds on the Capacity Region of the Compound MAC
with a Relay}

\label{s:cMACr}

In this section, we focus on the general cMACr model illustrated in Fig.
\ref{model}. Single-letter characterization of the capacity region for this
model is open even for various special cases. Our goal here is to provide
inner and outer bounds.

The following inner bound is obtained by the DF scheme.
The relay fully decodes messages of both users so that we have a MAC from the
transmitters to the relay terminal. Once the relay has decoded the messages,
the transmission to the receivers takes place similarly to the MAC with a
cognitive relay model of Section \ref{s:MAC_cogrel}.

\begin{prop}\label{p:achDF}
For the cMACr as seen in Fig. \ref{model}, any rate triplet
$(R_{1},R_{2},R_{3})$ with $R_{j}\geq0$, $j=1,2,3$, satisfying
\begin{subequations}\label{ach:DF}%
\begin{align}
R_{1}  &  \leq I(X_{1};Y_{3}|U_{1}, X_{2},X_{3},Q),\label{ach 1}\\
R_{2}  &  \leq I(X_{2};Y_{3}|U_{2}, X_{1},X_{3},Q), \\
R_{1}+R_{2}  &  \leq I(X_{1},X_{2};Y_{3}|U_{1}, U_{2}, X_{3},Q),\label{ach 3}\\
R_{3}  &  \leq\min\{I(X_{3};Y_{1}|X_{1},X_{2},U_{1},U_{2},Q),\nonumber\\
&  ~~~~~ I(X_{3};Y_{2}|X_{1},X_{2},U_{1},U_{2},Q)\}, \label{ach 4}\\
R_{1}+R_{3}  &  \leq\min\{I(X_{1},X_{3};Y_{1}|X_{2},U_{2},Q), \nonumber\\
&  ~~~~~ I(X_{1},X_{3};Y_{2}|X_{2},U_{2},Q)\}, \\
R_{2}+R_{3}  &  \leq\min\{I(X_{2},X_{3};Y_{1}|X_{1},U_{1},Q), \nonumber\\
&  ~~~~~ I(X_{2},X_{3};Y_{2}|X_{1},U_{1},Q)\} \mbox{ and } \\
R_{1}+R_{2}+R_{3}  &  \leq\min\{I(X_{1},X_{2},X_{3};Y_{1}|Q), \nonumber\\
&  ~~~~~ I(X_{1},X_{2},X_{3};Y_{2}|Q)\} \label{ach 7}%
\end{align}
for auxiliary random variables $U_{1}, U_{2}$ and $Q$ with a joint distribution of the form $p(q)p(x_{1},u_{1}|q)p(x_{2},u_{2}|q)p(x_{3}|u_{1},u_{2},q)$ $p(y_{1},y_{2},y_{3}|x_{1},x_{2},x_{3})$ is achievable by DF.
\end{subequations}
\end{prop}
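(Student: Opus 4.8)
The plan is to combine a block-Markov decode-and-forward relaying scheme with the coding strategy used for the MAC with a cognitive relay (Proposition~\ref{prop:1}). The transmission is split into $B$ blocks; in block $b$ the transmitters send fresh messages $W_1^{(b)},W_2^{(b)}$, and the relay, having decoded $W_1^{(b-1)},W_2^{(b-1)}$ at the end of block $b-1$ (and knowing its own $W_3$), forwards cooperative information along with $W_3$. Codebook generation mirrors the distribution in the statement: generate $q^n$ according to $\prod p(q_t)$; for each user $i$ generate cloud centers $u_i^n(w_i^{(b-1)})$ i.i.d.\ $\sim \prod p(u_{i,t}|q_t)$ and satellite codewords $x_i^n(w_i^{(b-1)},w_i^{(b)})$ $\sim \prod p(x_{i,t}|u_{i,t},q_t)$; and for the relay generate $x_3^n(w_1^{(b-1)},w_2^{(b-1)},w_3)$ $\sim \prod p(x_{3,t}|u_{1,t},u_{2,t},q_t)$. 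Here $u_i^n$ plays the dual role of the ``resolution/cooperation'' codeword that both the relay and the receivers can decode, exactly as in Proposition~\ref{prop:1}.

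Next I would describe decoding in two parts. \emph{At the relay:} at the end of block $b$, the relay already knows $w_1^{(b-1)},w_2^{(b-1)}$ (hence $u_1^n,u_2^n$ and its own $x_3^n$), so it jointly-typicality-decodes the fresh pair $(w_1^{(b)},w_2^{(b)})$ from $Y_3^n$. Standard MAC error analysis with these side-informed codewords yields the constraints \eqref{ach 1}--\eqref{ach 3}, namely $R_1 \le I(X_1;Y_3|U_1,X_2,X_3,Q)$, $R_2 \le I(X_2;Y_3|U_2,X_1,X_3,Q)$, and $R_1+R_2\le I(X_1,X_2;Y_3|U_1,U_2,X_3,Q)$; one must check there is no ``sum'' ambiguity beyond these three because the relay conditions on the known previous messages. \emph{At each receiver $j$:} using backward decoding (or sliding-window decoding), receiver $j$ decodes the triple $(w_1^{(b-1)},w_2^{(b-1)},w_3)$ — equivalently it must resolve $W_1,W_2,W_3$ — from $Y_j^n$ across the relevant block(s). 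Because $x_3^n$ carries $(w_1^{(b-1)},w_2^{(b-1)},w_3)$ and $x_i^n$ carries $u_i^n(w_i^{(b-1)})$ as well as the older $w_i^{(b-2)}$ which by the induction of backward decoding is already known, the error events at the receiver are exactly those of the cognitive-relay MAC of Proposition~\ref{prop:1} applied to output $Y_j$. This produces \eqref{ach 4}--\eqref{ach 7} but with a $\min$ over $j=1,2$, since both receivers must succeed. Letting $B\to\infty$ washes out the rate loss from the initialization/termination blocks.

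The main obstacle — and the step deserving the most care — is the joint error analysis that interleaves the two phases, i.e.\ verifying that the relay-decoding constraints and the receiver-decoding constraints can be imposed \emph{simultaneously} without a hidden coupling. Concretely, one has to argue that with backward decoding the ``previous-block'' messages $w_i^{(b-1)}$ that the relay's codeword depends on are precisely the ones the receiver is currently resolving, so the receiver's decoding metric genuinely sees the full cloud-center/satellite structure $(U_1,U_2,X_1,X_2,X_3)$ with the same joint law \eqref{pmf}; and symmetrically that the relay, in decoding block $b$, correctly treats $(u_1^n,u_2^n,x_3^n)$ as known. One should also confirm that no extra bound of the form $R_1+R_3$ or $R_2+R_3$ arises at the relay (it does not, since $R_3$ is not decoded there) and that the only place the $\min_{j}$ enters is the receiver side. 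Given Proposition~\ref{prop:1} as a black box for the second phase and the standard MAC coding theorem for the first, the remainder is routine: pick $\epsilon$-typical sets, union-bound the polynomially many error events per block, and send $n\to\infty$ then $B\to\infty$.
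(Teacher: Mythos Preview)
Your approach is essentially identical to the paper's: block-Markov DF with the superposition structure of Proposition~\ref{prop:1}, relay decoding giving \eqref{ach 1}--\eqref{ach 3}, and backward decoding at each receiver reducing to the cognitive-relay MAC (Proposition~\ref{prop:1} and Remark~\ref{r:cMAC_cogrel}) to give \eqref{ach 4}--\eqref{ach 7} with the $\min$ over $j$. One small indexing slip to fix: in block $b$ the satellite of $x_i^n(w_i^{(b-1)},w_i^{(b)})$ carries the \emph{fresh} $w_i^{(b)}$, which is what backward decoding has already recovered from block $b+1$, not ``the older $w_i^{(b-2)}$.''
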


\begin{proof}
The proof follows by combining the block-Markov transmission strategy with DF
at the relay studied in \cite{Kramer:IT:05}, the joint encoding to handle the
private relay message and backward decoding at the receivers. Notice that
conditions (\ref{ach 1})-(\ref{ach 3}) ensure correct decoding at the relay,
whereas (\ref{ach 4})-(\ref{ach 7}) follow similarly to Proposition
\ref{prop:1} and Remark \ref{r:cMAC_cogrel} ensuring correct decoding at both receivers.
\end{proof}

Next, we consider applying the CF strategy at the relay terminal. In CF relaying introduced in \cite{Cover:IT:79}, the relay does not decode
the source message, but facilitates decoding at the destination by
transmitting a quantized version of its received signal. In quantizing its
received signal, the relay takes into consideration the correlated received
signal at the destination terminal and applies Wyner-Ziv source compression
(see \cite{Cover:IT:79} for details). In the cMACr scenario, unlike the
single-user relay channel, we have two distinct destinations, each with
different side information correlated with the relay received signal. This
situation is similar to the problem of lossy broadcasting of a common source
to two receivers with different side information sequences considered in
\cite{Nayak:IT:08} (and solved in some special cases), and applied to the
two-way relay channel setup in \cite{Gunduz:All:08}. Here, for simplicity, we
consider broadcasting only a single quantized version of the relay received
signal to both receivers. The following proposition states the corresponding
achievable rate region.

\begin{prop}
\label{p:achCF} For the cMACr of Fig. \ref{model}, any rate triplet
$(R_{1},R_{2},R_{3})$ with $R_{j}\geq0$, $j=1,2,3$, satisfying
\begin{align*}
R_{1}  &  \leq\min\{I(X_{1};Y_{1},\hat{Y}_{3}|X_{2},X_{3},Q), I(X_{1}%
;Y_{2},\hat{Y}_{3}|X_{2},X_{3},Q)\},\\
R_{2}  &  \leq\min\{I(X_{2};Y_{2},\hat{Y}_{3}|X_{1},X_{3},Q), I(X_{2}%
;Y_{1},\hat{Y}_{3}|X_{1},X_{3},Q)\},
\end{align*}
and \vspace{-.15in}
\begin{align*}
R_{1}+R_{2}\leq\min\{  &  I(X_{1},X_{2};Y_{1},\hat{Y}_{3}|X_{3},Q),\\
&  ~~~~~~ I(X_{1},X_{2};Y_{2},\hat{Y}_{3}|X_{3},Q)\}
\end{align*}
such that
\vspace{-.07in}
\begin{align*}
R_{3}+I(Y_{3};\hat{Y}_{3}|X_{3},Y_{1},Q)  &  \leq I(X_{3};Y_{1} |Q)
\mbox{ and }\\
R_{3}+I(Y_{3};\hat{Y}_{3}|X_{3},Y_{2},Q)  &  \leq I(X_{3};Y_{2} |Q)
\end{align*}
for random variables $\hat{Y}_{3}$ and $Q$ with a joint distribution
$p(q,x_{1},x_{2},x_{3},y_{1},y_{2},y_{3},\hat{y}_{3}) = p(q)p(x_{1}|q)
p(x_{2}|q) p(x_{3}|q)$ $p(\hat{y}_{3}|y_{3},x_{3},q) p(y_{1},y_{2},y_{3}%
|x_{1},x_{2},x_{3})$ is achievable with $\hat{Y}_{3}$ having bounded cardinality.
\end{prop}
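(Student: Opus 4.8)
The plan is to use a block-Markov compress-and-forward (CF) scheme with Wyner--Ziv binning at the relay and sequential (one-block-delay) decoding at each destination, generalizing the CF scheme for the relay channel \cite{Cover:IT:79} in two ways: there are two source messages, so the destinations perform a multiple-access decoding step, and the single relay codeword must carry the private message $W_3$ together with the compression bin index.

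\emph{Codebook and encoding.} Fix the product input distribution $p(q)p(x_1|q)p(x_2|q)p(x_3|q)$, a test channel $p(\hat y_3|y_3,x_3,q)$, a block length $n$, and $B$ blocks; source $i$ splits $W_i$ into $B-1$ independent per-block sub-messages, so the per-block rate tends to $R_i$ as $B\to\infty$. At the relay, generate $2^{n(R_3+\hat R_0)}$ codewords $X_3^n$ i.i.d.\ $\sim\prod p(x_3|q)$ (the extra rate $\hat R_0$ is for the bin index), and for each such codeword generate $2^{n\hat R}$ quantization codewords $\hat Y_3^n$ i.i.d.\ $\sim\prod p(\hat y_3|x_3,q)$, partitioned uniformly into $2^{n\hat R_0}$ bins. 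Having observed $Y_3^n(b)$ in block $b$, the relay selects a $\hat Y_3^n$ codeword (from the sub-codebook of the current $X_3^n(b)$) jointly typical with $(Y_3^n(b),X_3^n(b))$; by the covering lemma this succeeds w.h.p.\ if $\hat R>I(Y_3;\hat Y_3|X_3,Q)$. Its bin index $s_b$ is then conveyed in block $b+1$ by transmitting the $X_3^n$ codeword indexed by $(w_{3,b+1},s_b)$.

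\emph{Decoding at receiver $k\in\{1,2\}$.} Proceeding forward in $b$: first decode the relay codeword $X_3^n(b+1)$ from $Y_k^n(b+1)$, treating $X_1^n(b+1),X_2^n(b+1)$ as noise, which recovers $(w_{3,b+1},s_b)$ provided $R_3+\hat R_0\le I(X_3;Y_k|Q)$. Then, using $s_b$, the already-decoded $X_3^n(b)$, and $Y_k^n(b)$ as side information, find the unique $\hat Y_3^n$ codeword in bin $s_b$ of the $X_3^n(b)$ sub-codebook that is jointly typical with $(Y_k^n(b),X_3^n(b))$; by the standard packing argument (using the Markov chain $\hat Y_3-(Y_3,X_3,Q)-Y_k$) this is correct w.h.p.\ if $\hat R_0>I(Y_3;\hat Y_3|X_3,Y_k,Q)$. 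Finally, regarding $(Y_k^n(b),\hat Y_3^n(b))$ as the output of a two-user MAC conditioned on $X_3^n(b)$, decode $(w_{1,b},w_{2,b})$; this requires $R_1\le I(X_1;Y_k,\hat Y_3|X_2,X_3,Q)$, $R_2\le I(X_2;Y_k,\hat Y_3|X_1,X_3,Q)$, and $R_1+R_2\le I(X_1,X_2;Y_k,\hat Y_3|X_3,Q)$.

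\emph{Putting it together.} Eliminating $\hat R_0$ between $\hat R_0>I(Y_3;\hat Y_3|X_3,Y_k,Q)$ and $R_3+\hat R_0\le I(X_3;Y_k|Q)$ gives $R_3+I(Y_3;\hat Y_3|X_3,Y_k,Q)\le I(X_3;Y_k|Q)$ for $k=1,2$, and imposing every rate constraint for both $k=1$ and $k=2$ yields the minima in the statement; edge effects in the first and last blocks cost an $O(1/B)$ rate loss that vanishes. A union bound over the covering step, the relay-codeword step, the binning step, and the MAC step, across all blocks and both receivers, gives $P_e^n\to0$. Finiteness of $|\hat{\mathcal Y}_3|$ follows from the support lemma applied to the finitely many mutual-information functionals above. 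The one point that needs care is that the single relay codeword must simultaneously carry the private message $W_3$ and the Wyner--Ziv bin index over the same effective link of rate $I(X_3;Y_k|\cdot)$ — this is precisely what couples $R_3$ and the compression rate in the final constraint; the two-destination aspect is otherwise handled trivially here because one quantization $\hat Y_3$ serves both receivers (a finer scheme would compress $Y_3$ differently for the two side informations, cf.\ \cite{Nayak:IT:08}).
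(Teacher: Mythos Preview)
Your overall architecture---block-Markov CF, Wyner--Ziv compression at the relay, the relay's private message multiplexed with the bin index on $X_3$---is the right one, but the Fourier--Motzkin step at the end is wrong, and the error is not cosmetic: your scheme as written does \emph{not} achieve the region in the proposition.

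The bin rate $\hat R_0$ is a single design parameter that must satisfy $\hat R_0>I(Y_3;\hat Y_3|X_3,Y_k,Q)$ and $R_3+\hat R_0\le I(X_3;Y_k|Q)$ for \emph{both} $k=1,2$ simultaneously. Correct elimination of $\hat R_0$ therefore gives the single condition
\[
R_3+\max_{k\in\{1,2\}}I(Y_3;\hat Y_3|X_3,Y_k,Q)\;\le\;\min_{k\in\{1,2\}} I(X_3;Y_k|Q),
\]
which is strictly stronger than the two per-receiver conditions $R_3+I(Y_3;\hat Y_3|X_3,Y_k,Q)\le I(X_3;Y_k|Q)$ whenever the receiver with the weaker relay link happens to have the better side information (or vice versa). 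Separate ``decode $X_3$ first, then look inside the bin'' decoding cannot avoid this coupling, because the bin partition is fixed at the encoder and must serve both receivers.

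The paper does not reproduce the proof here, but the paragraph preceding the proposition points to the missing ingredient: this is exactly the joint source--channel problem of broadcasting a common (lossy) description to two receivers with different side information \cite{Nayak:IT:08}, \cite{Gunduz:All:08}. Concretely, drop the binning: index $X_3^n(b{+}1)$ directly by the pair $(m_b,w_{3,b+1})$ where $m_b$ is the compression index itself, and have receiver $k$ search jointly for $(m,w_3)$ such that $(X_3^n(m,w_3),Y_k^n(b{+}1))$ is typical \emph{and} $(\hat Y_3^n(m),X_3^n(b),Y_k^n(b))$ is typical. The packing bound for this joint test is $\hat R+R_3\le I(X_3;Y_k|Q)+I(\hat Y_3;Y_k|X_3,Q)$, and combining with the covering condition $\hat R>I(Y_3;\hat Y_3|X_3,Q)$ yields $R_3+I(Y_3;\hat Y_3|X_3,Y_k,Q)\le I(X_3;Y_k|Q)$ \emph{separately} for each $k$. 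With that change, the rest of your argument (MAC decoding on $(Y_k,\hat Y_3)$ conditioned on $X_3$, edge effects, support-lemma cardinality bound) goes through unchanged.
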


\begin{proof}
The proof, which we present in \cite{Gunduz:ITtbs}, is skipped here due to lack of space.
\end{proof}

\begin{rem}
The achievable rate region given in Proposition \ref{p:achCF} can
potentially be improved. Instead of broadcasting a single quantized version of
its received signal, the relay can transmit two descriptions so that the
receiver with an overall better quality in terms of its channel from the relay
and the side information received from its transmitter, receives a better
description, and hence higher rates (see \cite{Nayak:IT:08} and
\cite{Gunduz:All:08}). Another possible extension which we will
not pursue here is to use the partial DF scheme together with CF \cite{Cover:IT:79}, \cite{Gunduz:All:08}.
\end{rem}

We are now interested in studying the special case in which each source
terminal can reach only one of the destination terminals directly. Assume, for
example, that there is no direct connection between source terminal 1 and
destination terminal 2, and similarly between source terminal 2 and
destination terminal 1. In practice, this setup might model either a larger
distance between the disconnected terminals, or some physical constraint in
between the terminals blocking the connection. In such a scenario, the relay
is essential in providing coverage to multicast data to both receivers. We
model this scenario by the (symbol-by-symbol) Markov chain conditions:
\begin{align}
\label{markov}Y_{1}- (X_{1},X_{3})-X_{2} \mbox{ and } Y_{2}- (X_{2}%
,X_{3})-X_{1}.
\end{align}
The following proposition, whose proof we present in \cite{Gunduz:ITtbs}, provides an outer bound for
the capacity region.
\begin{prop}
\label{p:outerbound} Assuming that the Markov chain conditions (\ref{markov})
hold for any channel input distribution, a rate triplet ($R_{1},R_{2},R_{3}$)
with $R_{j}\geq0$, $j=1,2,3,$ is achievable only if
\begin{align*}
R_{1}  &  \leq I(X_{1};Y_{3}|U_{1}, X_{2}, X_{3},Q),\\
R_{2}  &  \leq I(X_{2};Y_{3}|U_{2}, X_{1},X_{3},Q),\\
R_{3}  &  \leq\min\{I(X_{3};Y_{1}|X_{1},X_{2},U_{1},U_{2},Q),\\
& ~~~~ I(X_{3};Y_{2}|X_{1},X_{2},U_{1},U_{2},Q)\},\\
R_{1}+R_{3}  &  \leq\min\{I(X_{1},X_{3};Y_{1}|U_{2},Q),\\
& ~~~~ I(X_{3};Y_{2}|X_{2},U_{2},Q)\},\\
R_{2}+R_{3}  &  \leq\min\{I(X_{3};Y_{1}|X_{1},U_{1},Q),\\
& ~~~~ I(X_{2},X_{3};Y_{2}|U_{1},Q)\}\\
R_{1}+R_{2}+R_{3}  &  \leq\min\{I(X_{1},X_{3};Y_{1}|Q),I(X_{2},X_{3}%
;Y_{2}|Q)\}
\end{align*}
for some auxiliary random variables $U_{1}, U_{2}$ and $Q$ satisfying the
joint distribution $p(q)p(x_{1},u_{1}|q)p(x_{2},u_{2}|q)$ $p(x_{3}|u_{1}%
,u_{2},q) p(y_{1},y_{2},y_{3}|x_{1},x_{2},x_{3})$.
\end{prop}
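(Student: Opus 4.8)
The plan is to prove this by a Fano-plus-single-letterization argument in which the Markov chain conditions (\ref{markov}) are used to isolate how each message reaches each receiver. Fix a sequence of $(2^{nR_1},2^{nR_2},2^{nR_3},n)$ codes with $P_e^n\to 0$. Applying Fano's inequality at \emph{both} receivers gives $H(W_1,W_2,W_3\mid Y_j^n)\le n\epsilon_n$ for $j=1,2$ with $\epsilon_n\to 0$, hence $H(W_S\mid Y_j^n)\le n\epsilon_n$ for every $S\subseteq\{1,2,3\}$; these are the starting inequalities for all of the bounds.

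For the ``relay-bottleneck'' bounds I would treat $R_1\le I(X_1;Y_3\mid U_1,X_2,X_3,Q)$ first (the bound on $R_2$ is symmetric). Since the messages are independent, $nR_1=H(W_1\mid W_2,W_3)$, and using Fano at \emph{receiver 2} together with $I(W_1;Y_2^n\mid W_2,W_3)\le I(W_1;Y_2^n,Y_3^n\mid W_2,W_3)$, I would expand the latter by the chain rule over time. Memorylessness, the fact that $X_{2,i}$ is a function of $W_2$, and that $X_{3,i}$ is a causal function of $(W_3,Y_3^{i-1})$ let each per-letter term be bounded by a single-letter quantity of the form $I(X_{1,i};Y_{3,i}\mid\cdot)$; the Markov chain $Y_2-(X_2,X_3)-X_1$ is exactly what kills the direct contribution of $Y_{2,i}$ and leaves only the part that passes through the relay's observation. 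The bound $R_1+R_3\le I(X_3;Y_2\mid X_2,U_2,Q)$ (and its symmetric version) follows the same template, now conditioning on $W_2$ and bounding $n(R_1+R_3)=H(W_1,W_3\mid W_2)$. The remaining inequalities are ``receiver-side'' bounds: for $R_3\le I(X_3;Y_j\mid X_1,X_2,U_1,U_2,Q)$, $R_1+R_3\le I(X_1,X_3;Y_1\mid U_2,Q)$, $R_2+R_3\le I(X_2,X_3;Y_2\mid U_1,Q)$, and $R_1+R_2+R_3\le\min\{I(X_1,X_3;Y_1\mid Q),I(X_2,X_3;Y_2\mid Q)\}$, I would use Fano at the relevant receiver, a MAC-type chain-rule expansion, and again (\ref{markov}) to discard the input that does not reach that receiver directly ($X_2$ at receiver~1, $X_1$ at receiver~2).

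The auxiliary variables would be introduced in the standard way: $Q$ as a uniform time-sharing index (possibly augmented with common past data), and $U_{1,i}$, $U_{2,i}$ chosen from the message/past-observation data of the two sources so that the single-letter terms produced above acquire exactly the claimed conditioning — intuitively $U_1$ (resp.\ $U_2$) is the portion of transmitter~$1$'s (resp.\ $2$'s) information that is reflected in the relay's causal output. One then averages over the time index to pass from the per-letter bounds to the stated region.

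The main obstacle I expect is \emph{not} the chain-rule bookkeeping but showing that the chosen $(U_1,U_2,Q)$ admits the factorization $p(q)p(x_1,u_1\mid q)p(x_2,u_2\mid q)p(x_3\mid u_1,u_2,q)p(y_1,y_2,y_3\mid x_1,x_2,x_3)$ — specifically the conditional independence $(X_1,U_1)-Q-(X_2,U_2)$ together with $X_3$ being conditionally independent of $(X_1,X_2)$ given $(U_1,U_2,Q)$. This is the delicate point common to converses for channels with a causal relay/cooperation: it dictates how much of the past can be folded into $Q$ versus into $U_1,U_2$, and the memorylessness of the channel and the Markov conditions (\ref{markov}) must be combined carefully so that all three structural requirements hold with a \emph{single} choice of auxiliaries. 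A secondary check is that the ``extra'' $Y_{1,i}$ or $Y_{2,i}$ terms thrown up by the expansions can indeed be absorbed into $U_1$, $U_2$ or dropped via (\ref{markov}), so that the conditioning in each single-letter bound is precisely as written in the statement.
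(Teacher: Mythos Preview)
The paper does not actually prove this proposition: the sentence preceding the statement reads ``whose proof we present in \cite{Gunduz:ITtbs},'' so the argument is deferred entirely to the companion paper and there is nothing here to compare against at the level of technique.

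That said, your plan follows the standard Fano-plus-single-letterization template one would expect for such a converse, and you have correctly identified where the Markov conditions (\ref{markov}) enter: they are what allow, e.g., the bound on $R_1$ obtained via receiver~2 to collapse to a term involving only $Y_3$, and what strip $X_2$ (resp.\ $X_1$) out of the mutual-information expressions at receiver~1 (resp.\ receiver~2). You are also right that the genuinely delicate step is not the chain-rule bookkeeping but the identification of $(Q,U_1,U_2)$ so that a \emph{single} choice simultaneously (i) produces the correct conditioning in every inequality and (ii) satisfies the factorization $p(q)p(x_1,u_1\mid q)p(x_2,u_2\mid q)p(x_3\mid u_1,u_2,q)$. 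In particular, the requirement that $X_3$ depend on $(X_1,X_2)$ only through $(U_1,U_2,Q)$, while $(X_1,U_1)$ and $(X_2,U_2)$ remain conditionally independent given $Q$, is exactly the tension you should expect to spend most of your effort on; this is where converses for channels with a causal relay typically require care, and your proposal flags it without yet resolving it.
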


%\begin{proof}
%The proof can be found in Appendix \ref{app:outbnd}.
%\end{proof}

By imposing the condition (\ref{markov}) on the DF achievable rate region of
Proposition \ref{p:achDF}, it can be easily seen that the only difference
between the outer bound and the achievable region with DF (\ref{ach:DF}) is
that the latter contains the additional constraint (\ref{ach 3}), which
generally reduces the rate region. The constraint (\ref{ach 3}) accounts for
the fact that the DF scheme prescribes both messages $W_{1}$ and $W_{2}$ to be decoded at the relay
terminal. The following remark provides two examples in which the DF scheme
achieves the outer bound in Proposition \ref{p:outerbound} and thus the
capacity region. In both cases, the multiple access interference at the relay
terminal is eliminated from the problem setup so that the condition
(\ref{ach 3}) does not limit the performance of DF.

\begin{rem}
In addition to the Markov conditions in (\ref{markov}), consider orthogonal
channels from the two users to the relay terminal, that is, we have
$Y_{3}\triangleq(Y_{31},Y_{32})$, where $Y_{3k}$ depends only on inputs
$X_{k}$ and $X_{3}$ for $k=1,2$; that is, we assume $X_{1}-(X_{2}%
,X_{3})-Y_{32}$ and $X_{2}-(X_{1},X_{3})-Y_{31}$ form Markov chains for any
input distribution. Then, it is easy to see that the sum-rate constraint at
the relay terminal is redundant and hence the outer bound in Proposition
\ref{p:outerbound} and the achievable rate region with DF in Proposition
\ref{p:achDF} match, yielding the full capacity region for this scenario. As
another example where DF\ is optimal, we consider a \textit{relay multicast
channel} setup, in which a single relay helps transmitter 1 to multicast its
message $W_{1}$ to both receivers, i.e., $R_{2}=R_{3}=0$ and $X_{2}=\emptyset
$. For such a setup, under the assumption that $X_{1}-X_{3}-Y_{2}$ forms a
Markov chain, the achievable rate with DF relaying in Proposition
\ref{p:achDF} and the above outer bound match. Specifically, the capacity $C$
for this \textit{multicast relay channel} is given by
\begin{equation}
C=\max_{p(x_{1},x_{3})}\min\{I(X_{1};Y_{3}|X_{3}),\text{ }I(X_{1},X_{3}%
;Y_{1}),\text{ }I(X_{3};Y_{2})\}.\nonumber
\end{equation}
\end{rem}

\begin{rem}
This work is limited to random coding techniques to provide achievable rate regions. However, the structure of the network under the Markov assumptions in (\ref{markov}) can be exploited using structured codes rather than the random codes. This enables the relay to decode only the modulo sum of the messages relaxing the sum-rate constraint at the relay. We explore this in more detail in \cite{Gunduz:ITtbs} and show that structured coding schemes achieve the capacity in certain scenarios.
\end{rem}

\section{Gaussian Compound MAC with a Relay}

\label{p:GcMACr}

A Gaussian cMACr\ satisfying the Markov conditions (\ref{markov}) is given by
\begin{subequations}
\label{G mac relay}%
\begin{align}
Y_{1}  &  =X_{1}+\eta X_{3}+Z_{1}\\
Y_{2}  &  =X_{2}+\eta X_{3}+Z_{2}\\
Y_{3}  &  =\gamma(X_{1}+X_{2})+Z_{3},
\end{align}
where $\gamma\geq0$ is the channel gain from the users to the relay and
$\eta\geq0$ is the channel gain from the relay to both receiver 1 and receiver
2. The noise components $Z_{i}$, $i=1,2,3$ are i.i.d. zero-mean unit variance
Gaussian random variables. We enforce the average power constraints $\frac
1{n}\sum\limits_{i=1}^{n}E[X_{ji}^{2}]\leq P_{j}$ for $j=1,2,3$. We define $C(x)=\frac12\log(1+x)$ for $x\in \mathds{R}^+$. Considering
for simplicity the case $R_{3}=0,$ we have the following result.
\end{subequations}
\begin{prop}
The following rate region is achievable for the Gaussian cMACr characterized
by (\ref{G mac relay}) with DF:
\begin{subequations}\label{region compound mac}%
\begin{align}
R_{1}  &  \leq\min\left\{
\begin{array}
[c]{l}C \left( \gamma^{2}P_{1}\left(  1-\frac{\alpha_{1}\alpha
_{3}^{\prime}}{1-\alpha_{2}\alpha_{3}^{\prime\prime}}\right)  \right)  ,\\
C\left(P_{1}+\eta^{2}P_{3}(1-\alpha_{3}^{\prime\prime})\right)
\end{array}
\right\} ,\\
R_{2}  &  \leq\min\left\{
\begin{array}
[c]{l}%
C\left(\gamma^{2}P_{2}\left(  1-\frac{\alpha_{2}\alpha
_{3}^{\prime\prime}}{1-\alpha_{1}\alpha_{3}^{\prime}}\right)  \right)  ,\\
C\left(P_{2}+\eta^{2}P_{3}(1-\alpha_{3}^{\prime})\right)
\end{array}
\right\}
\end{align}
and
\begin{align}
&  R_{1}+R_{2} \leq\min\left\{  C\left(P_{1}+\eta^{2}P_{3}+2\eta\sqrt{\alpha_{1}\alpha_{3}^{\prime}P_{1}P_{3}%
}\right) , \right. \nonumber\\
& ~~~~~~~~~~~~~~~~~~~~~ C\left(P_{2}+\eta^{2}P_{3}+2\eta\sqrt{\alpha_{1}\alpha
_{3}^{\prime\prime}P_{2}P_{3}} \right)  ,\nonumber\\
&  \left.  C\left(\gamma^{2}(P_{1}+P_{2})\left(  1-\frac
{(\sqrt{\alpha_{1}\alpha_{3}^{\prime}P_{1}}+\sqrt{\alpha_{2}\alpha_{3}%
^{\prime\prime}P_{2}})^{2}}{P_{1}+P_{2}}\right)  \right)  \right\}  ,
\label{sum rate gaussian}%
\end{align}
with the union taken over the parameters $0\leq\alpha_{1},\alpha_{2}%
,\alpha_{3}^{\prime},\alpha_{3}^{\prime\prime}\leq1$ and $\alpha_{3}^{\prime
}+\alpha_{3}^{\prime\prime}\leq1.$ An outer bound to the capacity region is
given by (\ref{region compound mac}) without the last sum-rate constraint in
(\ref{sum rate gaussian}).
\end{subequations}
\end{prop}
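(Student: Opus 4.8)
The plan is to read off the achievable region by evaluating the general DF region of Proposition~\ref{p:achDF} at a jointly Gaussian input, and to obtain the outer bound by evaluating the converse of Proposition~\ref{p:outerbound}; as observed in the text preceding the statement, under \eqref{markov} these two regions are term-by-term identical except that the converse omits the relay sum-rate constraint \eqref{ach 3}, which is exactly the last term of \eqref{sum rate gaussian}. So the proof splits into (i) showing that the Gaussian evaluation of \eqref{ach:DF} equals \eqref{region compound mac}, and (ii) showing that the Gaussian input is optimal in the converse. First I would record that \eqref{G mac relay} does obey \eqref{markov}, since $Y_1$ depends only on $(X_1,X_3,Z_1)$ and $Y_2$ only on $(X_2,X_3,Z_2)$; hence every $Y_1$-term in both propositions loses its dependence on $X_2$ and every $Y_2$-term loses its dependence on $X_1$. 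With $Q$ constant and $R_3=0$, the $R_3$-only constraint \eqref{ach 4} becomes vacuous and all surviving constraints collapse to bounds on $R_1$, $R_2$, and $R_1+R_2$.

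For (i) I would use the block-Markov DF scheme of Proposition~\ref{p:achDF} with Gaussian codebooks and superposition coding: take $U_1,U_2$ independent standard Gaussians; let transmitter~$j$ send $X_j$ equal to a ``cooperative'' component aligned with $U_j$ (carrying the previous block's message, which the relay has decoded) of power $\alpha_jP_j$, plus an independent ``fresh'' component of power $(1-\alpha_j)P_j$; and let the relay send $X_3$ equal to a linear combination of $U_1,U_2$ with power fractions $\alpha_3',\alpha_3''$, plus an independent Gaussian component of the residual power $(1-\alpha_3'-\alpha_3'')P_3$. This is an admissible distribution of the required product form $p(x_1,u_1)p(x_2,u_2)p(x_3\mid u_1,u_2)$ and meets the power constraints. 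Substituting into \eqref{ach:DF}: the relay-decoding constraints \eqref{ach 1}--\eqref{ach 3} reduce to the capacity region of the Gaussian MAC the relay sees for the two new-block messages once the already-known cooperative parts are removed from $Y_3$, and with the covariance parametrization chosen as above these yield the first arguments of the $\min$'s for $R_1,R_2$ and in \eqref{sum rate gaussian} (the coupled ratios such as $\alpha_1\alpha_3'/(1-\alpha_2\alpha_3'')$ arising from this reduction). The receiver constraints descending from \eqref{ach 4}--\eqref{ach 7} reduce, via \eqref{markov}, to point-to-point / MAC bounds of the form $C(\mathrm{Var}(X_1+\eta X_3\mid\cdot))$ at receiver~1 and $C(\mathrm{Var}(X_2+\eta X_3\mid\cdot))$ at receiver~2, evaluated with and without the conditioning on $U_1,U_2$; the $X_j$--$X_3$ cross-covariances produce the coherent-combining terms such as $2\eta\sqrt{\alpha_1\alpha_3'P_1P_3}$. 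Taking the union over $0\le\alpha_1,\alpha_2,\alpha_3',\alpha_3''\le1$ with $\alpha_3'+\alpha_3''\le1$ then gives \eqref{region compound mac}.

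For (ii), Proposition~\ref{p:outerbound} applies verbatim because \eqref{markov} holds here, and, as already noted, its region is obtained from \eqref{ach:DF} under \eqref{markov} simply by deleting \eqref{ach 3}; so it is enough to show that the Gaussian input above is extremal in the converse. Each constraint of Proposition~\ref{p:outerbound} is a conditional mutual information $I(\,\cdot\,;Y_i\mid\cdot)$ with $Y_i$ additive Gaussian, which for fixed second-order statistics of $(X_1,X_2,X_3,U_1,U_2)$ is maximized by jointly Gaussian variables; the maximization over the admissible covariances subject to $E[X_j^2]\le P_j$ is precisely the maximization over $(\alpha_1,\alpha_2,\alpha_3',\alpha_3'')$. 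Carrying this through reproduces \eqref{region compound mac} without the last term of \eqref{sum rate gaussian}.

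The step I expect to be the real obstacle is the Gaussian-extremality argument in (ii): the converse still carries the auxiliaries $U_1,U_2$ tied to the inputs through the factorization $p(x_1,u_1)p(x_2,u_2)p(x_3\mid u_1,u_2)$, so one must argue that restricting $(X_1,X_2,X_3,U_1,U_2)$ to be jointly Gaussian consistent with this factorization is without loss --- a maximum-entropy (or conditional entropy-power) argument that has to respect the Markov structure, not a one-line appeal to Gaussian maximality. The remainder --- reducing $Q$ to a constant, performing the Markov simplifications, and collapsing the handful of mutual-information expressions into the closed forms displayed (in particular pinning down the covariance parametrization so the coupled ratios come out exactly) --- is routine Gaussian bookkeeping.
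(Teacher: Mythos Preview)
The paper does not actually prove this proposition in-text: its entire proof reads ``The proof can be found in \cite{Gunduz:ITtbs}.'' So there is nothing to compare against at the level of detail. That said, your plan is exactly the route the paper's structure implies: the paragraph immediately preceding Proposition~\ref{p:outerbound} and the remark after it already spell out that, under the Markov conditions \eqref{markov}, the DF region of Proposition~\ref{p:achDF} and the outer bound of Proposition~\ref{p:outerbound} coincide except for the relay sum-rate constraint \eqref{ach 3}, which becomes the last term of \eqref{sum rate gaussian}. Specializing both regions to the Gaussian model \eqref{G mac relay} with a jointly Gaussian $(U_1,U_2,X_1,X_2,X_3)$ of the superposition form you describe, and then arguing Gaussian extremality for the outer bound, is precisely the intended and standard approach. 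Your identification of the only nontrivial step --- that the auxiliaries $U_1,U_2$ in the converse can be taken jointly Gaussian with the inputs without loss, respecting the factorization $p(x_1,u_1)p(x_2,u_2)p(x_3\mid u_1,u_2)$ --- is accurate; this is handled by the usual conditional maximum-entropy argument applied term by term with the second-order statistics fixed, and is what the companion paper \cite{Gunduz:ITtbs} presumably carries out.
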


\begin{proof}
The proof can be found in \cite{Gunduz:ITtbs}
\end{proof}

It is noted that the parameters $\alpha_{3}^{\prime}$ and $\alpha_{3}%
^{\prime\prime}$ represent the fractions of power that the relay uses to
cooperate with transmitter 1 and 2, respectively.

Next, we characterize the achievable rate region for the Gaussian setup with
the CF strategy of Proposition \ref{p:achCF}. Here, we assume a Gaussian
quantization codebook without claiming optimality.

\begin{prop}
The following rate region is achievable for the Gaussian cMACr characterized
by (\ref{G mac relay}):
\begin{subequations}
\label{ach:CF}%
\begin{align}
R_{1}  &  \leq C\left(\frac{\gamma^{2}\alpha_{1}P_{1}}{1+N_{q}%
}\right) \\
R_{2}  &  \leq C\left(\frac{\gamma^{2}\alpha_{2}P_{2}}{1+N_{q}%
}\right) \mbox{ and }
\end{align}
\begin{align}
R_1 + R_2 & \leq C(\bar{P}) + C\left(\frac{\gamma^2(\alpha_1P_1+\alpha_2P_2)}{1+N_q}\right)
\end{align}
where
\end{subequations}
\[
N_{q}=\frac{1+\gamma^{2}(\alpha_{1}P_{1}\alpha_{2}P_{2}+\alpha_{1}P_{1}%
+\alpha_{2}P_{2})+\bar{P}}
{\eta^{2}P_{3}},
\]
for all $0\leq\alpha_{i}\leq1$, $i=1,2$, where we defined $\bar{P} = \min\{\alpha_{1}P_{1}, \alpha_2 P_2\}$.
\end{prop}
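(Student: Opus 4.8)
The plan is to obtain this region by specializing the general compress-and-forward region of Proposition~\ref{p:achCF} to the Gaussian model~(\ref{G mac relay}) with $R_3=0$, evaluated at a jointly Gaussian test channel. Concretely, I would take the time-sharing variable $Q$ deterministic, let $X_1\sim\mathcal{N}(0,\alpha_1 P_1)$, $X_2\sim\mathcal{N}(0,\alpha_2 P_2)$ and $X_3\sim\mathcal{N}(0,P_3)$ be mutually independent (so transmitter $i$ uses a fraction $\alpha_i$ of its power, which is harmless for an inner bound and is what the union over $\alpha_1,\alpha_2$ sweeps), and choose the relay quantizer as the additive Gaussian test channel $\hat{Y}_3=Y_3+\hat{Z}$ with $\hat{Z}\sim\mathcal{N}(0,N_q)$ independent of everything else; the parameter $N_q>0$ is fixed in the next step.

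Next I would impose the two compression constraints of Proposition~\ref{p:achCF}, i.e.\ $I(Y_3;\hat{Y}_3\mid X_3,Y_k)\le I(X_3;Y_k)$ for $k=1,2$ (the $R_3$ term is zero). Each side is a scalar Gaussian mutual information: $I(X_3;Y_k)=C\big(\eta^2 P_3/(1+\alpha_k P_k)\big)$, while, after subtracting the known $\eta X_3$ from $Y_k$, $I(Y_3;\hat{Y}_3\mid X_3,Y_k)=\tfrac12\log\big(1+\mathrm{Var}\!\big(\gamma(X_1+X_2)+Z_3\mid X_k+Z_k\big)/N_q\big)$, with the conditional variance a one-dimensional MMSE. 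Solving each inequality for $N_q$ gives a lower bound on the admissible quantization noise; since one description must be decoded at both destinations, $N_q$ must meet the tighter of the two requirements, and that comparison is what produces the quantity $\bar P$ (the $\min$ over the two receivers) in the stated formula for $N_q$. I would set $N_q$ to this minimal value.

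With $N_q$ fixed I would evaluate the three rate inequalities. For $R_1$ the relevant term is $\min\{I(X_1;Y_1,\hat{Y}_3\mid X_2,X_3),\,I(X_1;Y_2,\hat{Y}_3\mid X_2,X_3)\}$; by the Markov condition~(\ref{markov}), $Y_2$ is independent of $X_1$ given $(X_2,X_3)$, so the second term collapses to $I(X_1;\hat{Y}_3\mid X_2,X_3)=C\big(\gamma^2\alpha_1 P_1/(1+N_q)\big)$, which is the smaller one and yields the claimed bound; $R_2$ is symmetric. For the sum rate I would compute $I(X_1,X_2;Y_k,\hat{Y}_3\mid X_3)$ for $k=1,2$: given $X_3$, this is the mutual information of a two-input, two-output Gaussian channel (at receiver~$1$: channel-matrix rows $(1,0)$ and $(\gamma,\gamma)$, input covariance $\mathrm{diag}(\alpha_1 P_1,\alpha_2 P_2)$, noise covariance $\mathrm{diag}(1,1+N_q)$); writing it as $\tfrac12\log\det(\cdot)$, simplifying the determinant, and taking the minimum over $k$ should give $C(\bar P)+C\big(\gamma^2(\alpha_1 P_1+\alpha_2 P_2)/(1+N_q)\big)$. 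The region then follows by taking the union over $0\le\alpha_1,\alpha_2\le1$.

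The step I expect to be the main obstacle is the sum-rate evaluation: the $2\times2$ Gaussian determinant does not factor by inspection, so coaxing $\min_k I(X_1,X_2;Y_k,\hat{Y}_3\mid X_3)$ into exactly the clean product form $C(\bar P)+C(\cdot)$ --- and verifying that the minimizing receiver is the one with the smaller $\alpha_k P_k$, consistently with the $\bar P=\min\{\alpha_1 P_1,\alpha_2 P_2\}$ that already entered $N_q$ --- is the delicate part. A secondary check is that the single $N_q$ chosen to satisfy both compression constraints is the same value appearing in all three rate bounds, so that the entire region is attained by one test channel rather than by separate choices for the separate inequalities.
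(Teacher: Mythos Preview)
Your plan is exactly the route the paper intends: the sentence immediately preceding the proposition says ``we assume a Gaussian quantization codebook,'' so the region is meant to be obtained by specializing Proposition~\ref{p:achCF} with independent Gaussian inputs $X_i\sim\mathcal{N}(0,\alpha_iP_i)$, $X_3\sim\mathcal{N}(0,P_3)$, the additive Gaussian test channel $\hat{Y}_3=Y_3+\hat{Z}$, and $R_3=0$. The paper, however, provides \emph{no} proof for this proposition (not even a pointer to \cite{Gunduz:ITtbs}), so there is nothing further to compare against.

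One caution on the algebra you flag as the delicate part. If you actually carry out the two compression constraints $I(Y_3;\hat{Y}_3\mid X_3,Y_k)\le I(X_3;Y_k)$, the bound coming from receiver $k$ is
\[
N_q\;\ge\;\frac{1+\gamma^{2}(\alpha_1P_1\alpha_2P_2+\alpha_1P_1+\alpha_2P_2)+\alpha_kP_k}{\eta^{2}P_3},
\]
so the \emph{tighter} requirement (the larger lower bound on $N_q$) is the one with the larger $\alpha_kP_k$, i.e.\ $\max\{\alpha_1P_1,\alpha_2P_2\}$, not $\min$. Likewise, the $2\times 2$ determinant for $\min_k I(X_1,X_2;Y_k,\hat{Y}_3\mid X_3)$ simplifies to
\[
\tfrac12\log\!\left[(1+\bar P)+\frac{\gamma^{2}(\alpha_1P_1+\alpha_2P_2+\alpha_1P_1\alpha_2P_2)}{1+N_q}\right],
\]
with $\bar P=\min\{\alpha_1P_1,\alpha_2P_2\}$ coming from the minimizing receiver; this does \emph{not} factor as $C(\bar P)+C\!\big(\gamma^{2}(\alpha_1P_1+\alpha_2P_2)/(1+N_q)\big)$ except when $\bar P=0$. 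In short, your approach is correct, but you should expect the exact expressions you derive to differ from those printed in the proposition; the discrepancy is in the statement, not in your method.
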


\begin{figure}[ptb]
\begin{center}
\includegraphics[width=3.5in]
{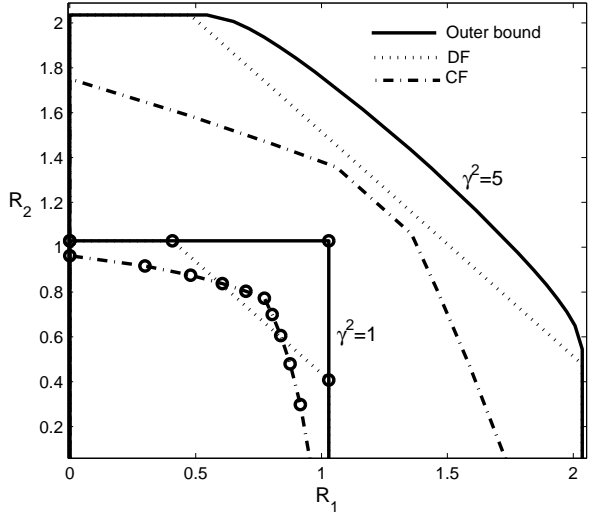}
\end{center}
\caption{Achievable rate region and outer bound for $P_{1}=P_{2}=P_{3}=5~dB$,
$\eta^{2}=10$ and different values of the channel gain from the terminals to
the relay, namely $\gamma^{2}=1,5.$}%
\label{compound region}%
\end{figure}

\subsubsection{Numerical examples}

Consider a cMACr with powers $P_{1}=P_{2}=P_{3}=5~dB$ and
channel gain $\eta^{2}=10$ from the relay to the two terminals. Fig.
\ref{compound region} shows the achievable rate region and the outer bound for
different values of the channel gain from the terminals to the relay, namely
$\gamma^{2}=1$ and $\gamma^2=5$. It can be seen that, if the channel to the relay is
weak, then CF improves upon DF at certain parts of the rate region. However, as $\gamma^{2}$ increases, DF gets very close to the outer bound dominating the CF rate region, since the sum rate constraint in DF scheme becomes less restricting.

In Fig. \ref{fig:sym_rate}, the symmetric rate achievable with DF and CF is compared
with the upper bound for $\gamma^{2}=1$ and $\eta^{2}=10$ versus
$P_{1}=P_{2}=P_{3}=P$. We see that while DF achieves a higher rate than CF and performs very close to the upper bound at low power values, with increasing power CF surpasses the DF rate for this channel setup. There is a constant bit gap between the upper bound and the achievable rates with DF and CF at higher power values.

\begin{figure}[ptb]
\begin{center}
\includegraphics[width=3.9in]
{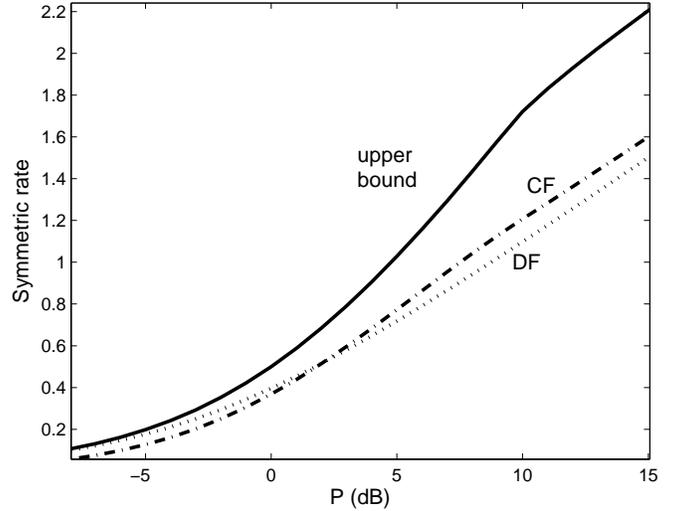}
\end{center}
\caption{Symmetric rate achievable with DF and CF strategies compared with the upper bound
for $\gamma^{2}=1$ and $\eta^{2}=10$ versus $P_{1}=P_{2}=P_{3}=P$. }%
\label{fig:sym_rate}%
\end{figure}

\section{Conclusions}

\label{s:conc}

We have considered a compound MAC with a relay terminal.
In this model, the relay simultaneously assists both transmitters
while multicasting its own information at the same time. We have first
characterized the capacity region for a MAC with a
cognitive relay and related models of partially cognitive relay and cognition
through finite capacity links. We then have used the coding technique that
achieves the capacity for these models to provide an achievable rate region
with DF relaying in the case of a general cMACr. We have also considered a CF
relaying scheme, in which the relay broadcasts a compressed version of
its received signal considering the received signals at the receivers as side
information. Here we have used a novel joint source-channel coding scheme to
improve the achievable rate region of the underlying multi-user channel coding problem. Strategies based on structured codes and physical layer network coding are studied in [9].

\end{document}